%% file: main-10.tex
\documentclass[conference]{IEEEtran}
\IEEEoverridecommandlockouts

\usepackage{cite}
\usepackage{algorithmic}
\usepackage{graphicx}
\usepackage{textcomp}
\usepackage{xcolor}

\newcommand{\bs}[1]{\boldsymbol{#1}}  
\newcommand{\mcl}[1]{\mathcal{#1}}

\makeatletter
\newcommand{\vast}{\bBigg@{3}} 
\newcommand{\Vast}{\bBigg@{4}} 
\makeatother

\newcommand{\hatP}{\widehat P}
\newcommand{\one}{\mathbf{1}}

\usepackage[utf8]{inputenc}
\usepackage[T1]{fontenc}
\usepackage{url}
\usepackage{tikz}
\usepackage{pgfplots} 
\usepgfplotslibrary{groupplots} 
\usepgfplotslibrary{dateplot}   
\pgfplotsset{compat=newest}
\usepackage{pgf}  
\usepackage{ifthen}  
\usepackage{cite}
\usepackage[cmex10]{amsmath} 
\usepackage{amsfonts, amssymb} 
\usepackage[version=4]{mhchem}
\usepackage{graphicx, amsthm, float, dsfont,color}
\usepackage[utf8]{inputenc} 
\usepackage[T1]{fontenc}
\usepackage{url}
\usepackage{ifthen}
\usepackage{cite}

\newtheorem{theorem}{Theorem} 
\newtheorem{lemma}{Lemma}
\newtheorem{corollary}{Corollary}  
 
\newtheorem{proposition}{Proposition}

 \DeclareMathOperator{\cav}{cav}

\begin{document}

\title{An Upper Bound to the Undetected Error Exponent with Erasure Decoding at Positive Rates
\thanks{This research was supported in part by the European Research Council under Grants 101142747 and 101158232, the Spanish Government under Grants PID2020-116683GB-C22, PID2021-128373OB-I00 and PID2025-167766OA-C22.}}

\author{\IEEEauthorblockN{Seyed AmirPouya Moeini}
\IEEEauthorblockA{
University of Cambridge\\
sam297@cam.ac.uk}
\and
\IEEEauthorblockN{Marco Dalai}
\IEEEauthorblockA{
University of Brescia\\
Brescia, Italy\\
marco.dalai@unibs.it}
\and
\IEEEauthorblockN{Albert Guill\'en i F\`abregas}
\IEEEauthorblockA{University of Cambridge\\
Universitat Polit\`ecnica de Catalunya\\
guillen@ieee.org}}

\maketitle

\begin{abstract}
	We study the undetected-error exponent of discrete memoryless channels with erasure decoding.
	Assuming that the undetected-error and erasure probabilities vanish, we derive an upper bound on the undetected-error exponent at positive rates. 
	The bound uses Somekh-Baruch's genie-aided argument where an auxiliary output provides the decoder with a list of codewords.
	It recovers Telatar's tight rate-zero converse and is at least as tight as the straight-line bound.
	A numerical example shows that the improvement over the straight-line bound can be strict.
\end{abstract}

\section{Introduction}

Let $W$ be a discrete memoryless channel (DMC) with finite input and output alphabets $\mcl X$ and $\mcl Y$, respectively.
A blocklength-$n$ code with $M_n$ messages consists of a codebook $\mcl C_n=\{\bs x_1,\ldots,\bs x_{M_n}\}\subseteq\mcl X^n$ and disjoint decoding regions $\mcl D_1,\ldots,\mcl D_{M_n}\subseteq\mcl Y^n$.
The decoder declares an erasure on the remaining outputs, which form the region
\begin{align}
 \mcl D_X=\mcl Y^n\setminus\bigcup_{m=1}^{M_n}\mcl D_m.
 \label{eq:erasure-region}
\end{align}
The maximal undetected-error and erasure probabilities are
\begin{align}
 P_e(\mcl C_n)&=\max_m\, W^n\!\bigg(\,\bigcup_{\bar m\ne m}\mcl D_{\bar m}\,\big|\,\bs x_m\bigg),
 \label{eq:undetected-probability}\\
 P_X(\mcl C_n)&=\max_m\, W^n(\mcl D_X|\bs x_m).
 \label{eq:erasure-probability}
\end{align}

We seek an upper bound on the undetected-error exponent for every sequence of codes with $P_e(\mcl C_n)\to0$ and $P_X(\mcl C_n)\to0$. 
At rate $R$, let $E_{\mathrm{u}}(R)$ denote the supremum of achievable undetected-error exponents over such sequences with $M_n\to\infty$ and $\liminf_{n\to\infty}\frac1n\log M_n\ge R$.

We assume $W(y|x)>0$ for all $(x,y)$.
This excludes channels with positive zero-undetected-error capacity, for which the undetected-error exponent is infinite at low rates.
Achievable error and erasure exponents were studied by Forney~\cite{Forney1968} and later by Telatar and Gallager~\cite{TelatarGallager1994} and Merhav~\cite{Merhav2008}. 
At rate zero, Telatar~\cite[Thm.~4.4]{Telatar1992} derived a tight converse on the optimal undetected error exponent, showing that
\begin{align}
\lim_{R \downarrow 0} E_{\mathrm{u}}(R)=\mcl F_W,
\label{eq:intro-zero-rate}
\end{align}
where $\mcl F_W\triangleq \max_Q\mcl F_W(Q)$, and
\begin{align}
 \mcl F_W(Q)
 \triangleq\sum_{(x,\bar x)}Q(x)Q(\bar x)\,
 D\bigl(W(\cdot|x)\|W(\cdot|\bar x)\bigr).
 \label{eq:F}
\end{align}
To the best of our knowledge, the only known converse at positive rates follows from the work of Nakibo\u glu and Zheng~\cite[Sec.~IV-C]{NakibogluZheng2012}.
They derived a straight-line bound for error-and-erasure decoding with feedback.
Applying their argument without feedback and using Telatar's rate-zero exponent as a starting point gives the following straight-line converse
\begin{align}
 E_{\mathrm{u}}(R)
 \le\left(1-\frac RC\right)  \mcl F_W,
 \quad 0\le R\le C,
 \label{eq:intro-straight-line}
\end{align}
where $C$ is the channel capacity.

In the standard no-erasure case, $\mcl D_X=\varnothing$ and $P_e(\mcl C_n)$ is the maximal decoding error probability. 
The reliability function $E(R,W)$ is the supremum of achievable error exponents at rate $R$ in this case. 
In~\cite{SomekhBaruch2024}, Somekh-Baruch derived an upper bound on $E(R,W)$ by introducing a genie that provides the decoder with a list containing the transmitted codeword, 
where all codewords in the list have the same joint type with an auxiliary sequence.
It was further shown in \cite{SomekhBaruch2024} that this bound is at least as tight as the classical converse bounds for DMCs.

In this work, we extend this approach to decoding with erasures and derive an upper bound on the undetected error exponent at positive rates.
We show that our bound is at least as tight as the straight-line bound \eqref{eq:intro-straight-line} and provide a numerical example showing a strict improvement.
Section~\ref{sec:main} presents the main results, Section~\ref{sec:proof} proves Theorem~\ref{thm:main}, and the Appendix contains the remaining proofs.

\smallskip\noindent\textit{Notation.} Scalar random variables are denoted by uppercase letters, their realizations by lowercase letters, and their alphabets by calligraphic letters. Random vectors are written in boldface. 
The type of a sequence $\bs x=(x_1,\ldots,x_n)\in\mcl X^n$ is its empirical distribution,
$ \hatP_{\bs x}(x)=\frac1n\sum_{i=1}^n\one\{x_i=x\}$.
Joint and conditional types are defined similarly. The type class $\mcl T^n(Q)$ contains all sequences in $\mcl X^n$ with type $Q$. 
When writing $\mcl C_n\subseteq\mcl T^n(Q)$ for a fixed $Q$, we allow the common $n$-type of the codewords to converge to $Q$.

\section{Main Results}\label{sec:main}

We extend Telatar's upper bound to positive rates using the genie construction of~\cite{SomekhBaruch2024}. 
Fix an input distribution $Q$ and a finite auxiliary alphabet $\mcl Z$. 
Let $P_{YZ|X}$ be an auxiliary broadcast channel with $P_{Y|X}=W$, and let $P_{XYZ}=QP_{YZ|X}$.
Write
\begin{align}\label{eq:conditional-channel}
 P_{Z|X}(z|x)
 &=\sum_y P_{YZ|X}(y,z|x),\\
 P_{Y|XZ}(y|x,z)
 &=\frac{P_{YZ|X}(y,z|x)}{P_{Z|X}(z|x)}. \label{eq--8}
\end{align}
Choose $\bs X$ equiprobably from $\mcl C_n$ and generate $(\bs Y,\bs Z)$ according to $P_{YZ|X}^n$.
Given $\bs Z=\bs z$ and the joint type $\hatP_{\bs{XZ}}$ of $(\bs X,\bs Z)$, the genie provides the decoder with the list
\begin{align}
 \mcl L=\{m:\hatP_{\bs x_m\bs z}=\hatP_{\bs{XZ}}\},\quad L=|\mcl L|.
 \label{eq:list}
\end{align}
The list contains all codewords having the same joint type with $\bs z$ as the transmitted codeword.
The list is random, since both $\bs Z$ and $\hatP_{\bs{XZ}}$ are random. 
Conditional on these observations, the transmitted message is equiprobable over $\mcl L$, and the output distribution for message $m\in\mcl L$ is $P_{Y|XZ}^n(\cdot|\bs x_m,\bs z)$.
The following lemma, which follows from~\cite[Lem.~6]{SomekhBaruch2024} and the law of large numbers, shows that the list is exponentially large with high probability whenever $I_P(X;Z)<R$.

\begin{lemma}
\label{lem:list}
	Let $\{\mcl C_n\}$ be a sequence of codes with $\mcl C_n\subseteq\mcl T^n(Q)$ and $\liminf_{n\to\infty}\frac1n\log M_n\ge R$.
	If $I_P(X;Z)<R$, then, for every $\delta>0$, with high probability,
\begin{align}
	 \hatP_{\bs{XZ}}&=P_{XZ}+o(1),\\
	 L&\ge\exp\bigl(n[R-I_P(X;Z)-\delta]\bigr).
	 \label{eq:large-list}
\end{align}
\end{lemma}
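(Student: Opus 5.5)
The proof has two parts. First I show that the joint type concentrates. Then I lower-bound the list size by a counting argument over the messages.

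\emph{Concentration of the joint type.} Since $\mcl C_n\subseteq\mcl T^n(Q)$, every codeword has type $Q+o(1)$. Given $\bs X=\bs x_m$, the sequence $\bs Z$ is generated memorylessly through $P_{Z|X}$. Applying Chebyshev's inequality to each entry $\hatP_{\bs x_m\bs Z}(x,z)$, or Hoeffding's inequality conditioned on the positions where $x_i=x$, gives $\hatP_{\bs x_m\bs Z}=QP_{Z|X}+o(1)=P_{XZ}+o(1)$ with probability tending to one. This holds uniformly in $m$, hence also for the random message.

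\emph{Lower bound on $L$.} Fix a conditional type $V$ close to $P_{Z|X}$, with joint type $Q\times V$. Consider the bipartite incidence between messages $m$ and sequences $\bs z\in\mcl T^n(V|\bs x_m)$. For each $\bs z$, the number of such $m$ is exactly the list size $L(\bs z)$. Conditioned on $\bs X=\bs x_m$ and $\hatP_{\bs{XZ}}=Q\times V$, the sequence $\bs Z$ is uniform on $\mcl T^n(V|\bs x_m)$, and this set has size $\exp(nH(V|Q)+o(n))$. The resulting distribution of $\bs Z$ (averaging over $m$) places mass $L(\bs z)/(M_n|\mcl T(V|\bs x)|)$ on $\bs z$. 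Hence
\[
\Pr\bigl[L(\bs Z)\le K\bigr]\le \frac{K\,|\mcl T^n(QV)_Z|}{M_n\,|\mcl T^n(V|\bs x)|},
\]
where $\mcl T^n(QV)_Z$ denotes the $Z$-marginal type class. The ratio of the two type-class sizes is $\exp\bigl(n[H(Z)-H(Z|X)]+o(n)\bigr)=\exp(nI(X;Z)+o(n))$, with $I$ evaluated at $Q\times V$.

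Now choose $K=\exp(n[R-I_P(X;Z)-\delta])$. For $V$ close enough to $P_{Z|X}$, continuity of mutual information in finite alphabets gives $I_{Q\times V}(X;Z)\le I_P(X;Z)+\delta/3$. Using $M_n\ge\exp(n(R-\delta/3))$ for large $n$, the probability above is at most $\exp(-n\delta/3+o(n))$. This is essentially \cite[Lem.~6]{SomekhBaruch2024}.

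\emph{Combining the two steps.} There are only polynomially many conditional types $V$. Summing the bound over those $V$ within the $o(1)$ neighbourhood of $P_{Z|X}$, and adding the vanishing probability that $\hatP_{\bs{XZ}}$ falls outside it, gives that both claims hold with probability tending to one. The hypothesis $I_P(X;Z)<R$ is used only to make $K$ exponentially large (for small $\delta$), so the bound is meaningful.

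The main obstacle is the $o(1)$ slack in the codeword types and neighbourhoods. I handle it by choosing the neighbourhood radius $\epsilon_n\to0$ slowly, so that concentration still holds while the continuity error in $I$ stays below $\delta/3$.
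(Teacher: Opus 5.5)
Your proposal is correct and takes essentially the same route as the paper, which proves this lemma in one line by invoking the law of large numbers for the joint type and the counting argument of \cite[Lem.~6]{SomekhBaruch2024} for the list size; you have simply written that counting argument out: $\bs Z$ is uniform on $\mcl T^n(V|\bs x_m)$, it places mass $L(\bs z)/(M_n|\mcl T^n(V|\bs x)|)$ on $\bs z$, and the type-class ratio is $\exp(nI+o(n))$. One step you leave implicit is worth stating: because all codewords share the same $n$-type $Q_n$, conditioning on $\hatP_{\bs{XZ}}=Q_n\times V$ leaves the message uniform over $\mcl C_n$, and this is what justifies averaging over $m$ with weight $1/M_n$.
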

The genie reduces the decoding problem to the list $\mcl L$, which contains exponentially many messages by Lemma~\ref{lem:list}.
Since the additional information can only help, an upper bound on the optimal undetected error exponent for this genie-aided problem is also valid for the original code.


For any channel $V$ with input alphabet $\mcl X$, let
$\mcl F_V$ be defined as in \eqref{eq:F} with $W$ replaced by $V$.
We denote its upper concave envelope, as a function of the input
distribution, by $\operatorname{cav}\mcl F_V$.
For $R>0$, define
\begin{align}\label{eq:main-bound}
	\begin{split}
	 \!\!\overline{E}_{\mathrm{e}}(Q,R)
	 &\!\triangleq \!\!\!\!\! \inf_{\substack{\mcl Z,\,P_{YZ|X}\\ P_{Y|X}=W\\ I_P(X;Z)<R}} \!\!\!
	 \Bigg[\!\sum_z P_Z(z)\operatorname{cav} \mcl F_{P_{Y|XZ=z}}\!(P_{X|Z=z})\\
	 &\qquad\qquad\qquad+I_P(X;Z)-R\Bigg],
	\end{split}
\end{align}
where the infimum is over finite auxiliary alphabets and strictly positive auxiliary channels.

\begin{theorem}\label{thm:main}
Let $R>0$, and let $\mcl C_n\subseteq\mcl T^n(Q)$ be a sequence of codes
with $M_n$ codewords, satisfying
$P_e(\mcl C_n)\to0$, $P_X(\mcl C_n)\to0$, and
$\liminf_{n\to\infty}\frac1n\log M_n\ge R$.
Then,
\begin{align}
    \limsup_{n\to\infty}-\frac1n\log P_e(\mcl C_n)
    \le\overline E_{\mathrm e}(Q,R).
    \label{eq:cc-converse}
\end{align}
\end{theorem}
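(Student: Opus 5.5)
Fix an auxiliary channel $P_{YZ|X}$ with $P_{Y|X}=W$, $P_{YZ|X}>0$ and $I_P(X;Z)<R$; it suffices to show that the undetected-error exponent of $\{\mcl C_n\}$ is at most the bracket in \eqref{eq:main-bound} plus an arbitrary $\delta>0$, and then take the infimum. The plan is to work in the genie-aided setting of Lemma~\ref{lem:list}. Draw $\bs X$ uniformly, generate $(\bs Y,\bs Z)$ through $P^n_{YZ|X}$, and reveal $(\bs z,\hatP_{\bs{XZ}})$, hence the list $\mcl L$. Since $P_{Y|X}=W$, the marginal law of $\bs Y$ given $\bs x_m$ is $W^n(\cdot|\bs x_m)$. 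Therefore the original decoding regions $\mcl D_m$, restricted to messages in $\mcl L$, are a legitimate erasure decoder for the list subproblem, and the remaining messages can be treated as erasures. By Lemma~\ref{lem:list}, with probability $1-o(1)$ the realized pair $(\bs z,\hatP_{\bs{XZ}})$ is \emph{good}: the joint type is $P_{XZ}+o(1)$ and $L\ge e^{n(R-I_P(X;Z)-\delta)}$. Within a good list every codeword $\bs x_m$ has the same joint type with $\bs z$. Hence, for the positions $i$ with $z_i=z$ (about $nP_Z(z)$ of them), the subcodewords all have type close to $P_{X|Z=z}$. Conditionally on the list, the channel from $\bs x_m$ to $\bs Y$ is the product channel $P^n_{Y|XZ}(\cdot|\bs x_m,\bs z)$, i.e.\ a parallel combination of the DMCs $P_{Y|XZ=z}$ on the blocks.

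The core step is a Telatar-type zero-rate argument applied to the list code of size $L$. Averaging over ordered pairs of list messages gives
\[
\frac{1}{L(L-1)}\sum_{m\ne\bar m\in\mcl L} D\bigl(P^n_{Y|XZ}(\cdot|\bs x_m,\bs z)\,\|\,P^n_{Y|XZ}(\cdot|\bs x_{\bar m},\bs z)\bigr)
\le n\sum_z P_Z(z)\cav\mcl F_{P_{Y|XZ=z}}(P_{X|Z=z})+o(n).
\]
To prove this, decompose the divergence coordinatewise and note that on block $z$ each coordinate contributes an $\mcl F$-type sum over the column empirical distribution $Q_i$ of the list. Since the sum runs over $m\ne\bar m$, the diagonal terms, which are zero, are removed, and the pair average is bounded by $\frac{L}{L-1}\mcl F(Q_i)$. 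Averaging over the columns in the block and applying concavity of $\cav\mcl F$ together with the fixed block type yields the right-hand side above. Because the channels are strictly positive, the divergences per letter are bounded, so no degenerate cases arise.

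The main obstacle is turning this average divergence into a lower bound on the undetected-error probability while charging only $I_P(X;Z)-R$ for passing from the list subproblem back to the whole code. The approach I would try first uses the decoding regions. Since $P_X\to0$ and $P_e\to0$, most list messages $m$ satisfy $P_{Y|XZ}^n(\mcl D_m|\bs x_m,\bs z)\ge 1/2$; I would obtain this by Markov's inequality over $(\bs z,\text{type})$, using that goodness has probability close to $1$. Then, for a pair $(m,\bar m)$, the data-processing inequality gives
\[
P^n_{Y|XZ}(\mcl D_m|\bs x_{\bar m},\bs z)\ge \exp\Bigl(-\frac{D(\cdot\|\cdot)+\log 2}{1/2-o(1)}\Bigr).
\]
This constant-factor loss is too crude, so I would refine it by restricting $\mcl D_m$ to its intersection with a typical set of $P^n_{Y|XZ}(\cdot|\bs x_m,\bs z)$ and using the standard change-of-measure bound $Q(A)\ge P(A)e^{-(D+o(n))/P(A)}$ with $P(A)\to1$. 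Choosing the pair $(m,\bar m)$ with divergence at most the average, and summing the error over the $L-1$ competing regions, shows that the conditional error probability for some list message is at least $e^{-n[\sum_z P_Z(z)\cav\mcl F(\cdot)+o(1)]}$.

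Finally, this conditional error must be converted into a bound on the maximal error of the original code. The error event in the list problem is contained in the original undetected-error event. The maximal error over the full codebook is at least the average error, and the genie's conditional error on a good event of probability $\ge 1/2$ lower-bounds that average. The resulting exponent is $\sum_z P_Z(z)\cav\mcl F_{P_{Y|XZ=z}}(P_{X|Z=z})$, and one must still recover the $+I_P(X;Z)-R$ term. I would obtain it by observing that the pairwise argument needs only two surviving messages per list, whereas the list contains $e^{n(R-I_P(X;Z)-\delta)}$ messages. Each message $m$ has its error spread over $\mcl D_{\bar m}$ for many $\bar m$, so summing the pairwise lower bounds over $\bar m\in\mcl L$ gains the factor $L$. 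This gain appears precisely as $-(R-I_P(X;Z))$ in the exponent. Letting $\delta\to0$ and taking the infimum gives \eqref{eq:cc-converse}.
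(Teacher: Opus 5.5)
Your outline follows the paper's proof: genie list, Markov to isolate reliable messages, binary data processing per pair, column distributions with $\cav\mcl F$ on the $z$-blocks, and summing over pairs. However, the step that produces the $I_P(X;Z)-R$ term, which you flag as the main obstacle, is not actually carried out, and the bookkeeping around it is off by a factor $L$. Put $A=\sum_z P_Z(z)\cav\mcl F_{P_{Y|XZ=z}}(P_{X|Z=z})$. Picking one pair with below-average divergence gives $P_{\bar m}(\mcl D_m)\gtrsim e^{-nA}$ for a single message. The genie's conditional error is the average over the $L$ equiprobable list messages, so this bounds it only by $e^{-nA}/L$, not by $e^{-nA}$ as you state. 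You therefore need a factor $L^2$, not $L$, to reach the target $L\,e^{-nA}$. That requires lower-bounding a sum over $\Theta(L^2)$ ordered pairs, but you control only the \emph{average} of the $E_{m\bar m}$, not each one, so ``summing the pairwise lower bounds'' does not follow.

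The missing ingredient is Jensen's inequality for the convex map $t\mapsto e^{-t}$, applied to the pairwise bounds over all ordered pairs of distinct messages. It gives a conditional error of at least $\frac{L'(L'-1)}{L}\exp\bigl(-\frac{1}{1-\delta}\bigl[\frac{1}{L'(L'-1)}\sum_{m\ne\bar m}E_{m\bar m}+\log 2\bigr]\bigr)$. For this to mesh with your divergence estimate, both indices must range over the reliable sublist $\mcl L'$, and the column distributions $Q_i$ must be computed on $\mcl L'$ rather than on $\mcl L$. With $m\in\mcl L'$ and $\bar m\in\mcl L$ the average is bilinear in two different column distributions, and the $\mcl F$/$\cav$ bound does not apply. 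The restriction costs nothing, because every codeword in $\mcl L'$ still has joint type $\hatP_{XZ}$ with $\bs z$.

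Second, your remedy for the factor-$2$ loss from the threshold $1/2$ does not work as written. If only $P_m(\mcl D_m)\ge 1/2$ is known, then $P_m(\mcl D_m\cap T)\ge 1/2-o(1)$ does not tend to one, and the inequality you quote still carries $1/P(A)$ in the exponent. There are two fixes.
\begin{itemize}
\item Apply Markov with threshold $\delta^2$, as the paper does. Then a $(1-\delta)$-fraction of the list has $P_m(\mcl D_m)\ge1-\delta$, and the loss is the factor $1/(1-\delta)$, removed by letting $\delta\downarrow0$ at the end.
\item Use the pointwise likelihood-ratio bound on $T=\{\bs y:\log\frac{P_m(\bs y)}{P_{\bar m}(\bs y)}\le E_{m\bar m}+n\epsilon\}$, namely $P_{\bar m}(\mcl D_m)\ge e^{-E_{m\bar m}-n\epsilon}\bigl(P_m(\mcl D_m)-P_m(T^c)\bigr)$. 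Here $P_m(T^c)\to0$ uniformly by Chebyshev, because strict positivity bounds the per-letter log-ratios.
\end{itemize}
The second option is a valid variant that avoids the $1/(1-\delta)$ factor, but it is a different inequality from the one you state.
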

The result also extends to codes that are not necessarily of constant composition.
\begin{corollary}\label{cor:arbitrary}
For $0< R< C$,
\begin{align}
 E_{\mathrm{u}}(R)
 \le\sup_{Q:\,I(QW)\ge R}\overline{E}_{\mathrm{e}}(Q,R).
 \label{eq:arbitrary-converse}
\end{align}
\end{corollary}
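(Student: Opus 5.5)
We prove Corollary~\ref{cor:arbitrary} by reducing an arbitrary code sequence to constant-composition subcodes and then applying Theorem~\ref{thm:main}.

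\emph{Step 1: extract a constant-composition subcode.} Fix a sequence of codes $\mcl C_n$ with $P_e(\mcl C_n)\to0$, $P_X(\mcl C_n)\to0$, $\liminf\frac1n\log M_n\ge R$, and undetected-error exponent $E$. The number of $n$-types is at most $(n+1)^{|\mcl X|}$, so for each $n$ some type $Q_n$ contains at least $M_n/(n+1)^{|\mcl X|}$ codewords. Let $\mcl C_n'$ consist of these codewords and keep their decoding regions. The removed decoding regions are merged into the erasure region.

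This operation does not increase the maximal undetected-error probability, since for each surviving message the union of the wrong regions only shrinks. It also does not increase the maximal erasure probability. To see this, note that for a surviving message $m$ the new erasure region is $\mcl D_X\cup\bigcup_{\bar m\text{ removed}}\mcl D_{\bar m}$. Its probability is at most $P_X(\mcl C_n)+P_e(\mcl C_n)\to0$. The rate loss is $O(\frac{\log n}{n})$, so $\liminf\frac1n\log M_n'\ge R$. Finally, $-\frac1n\log P_e(\mcl C_n')\ge -\frac1n\log P_e(\mcl C_n)$.

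\emph{Step 2: pass to a convergent type sequence.} Pass to a subsequence along which $-\frac1n\log P_e$ approaches the limsup. By compactness of the simplex, refine it further so that $Q_n\to Q$. The paper's convention allows the common type of $\mcl C_n'\subseteq\mcl T^n(Q)$ to converge to $Q$. Applying Theorem~\ref{thm:main} along this subsequence gives $E\le\overline E_{\mathrm e}(Q,R)$. Strictly, the theorem is stated for full sequences. However, its proof (and Lemma~\ref{lem:list}) only uses limits along the index set, so restricting to a subsequence is harmless. Alternatively, we can pad the sequence with the subsequence's codes.

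\emph{Step 3: show that $I(QW)\ge R$.} We need the limit type $Q$ to satisfy $I(QW)\ge R$. This follows from the strong converse for constant-composition codes. If $I(QW)<R$, then for large $n$ we have $I(Q_nW)<R-\epsilon$, by continuity of mutual information in $Q$. In that case, any code in $\mcl T^n(Q_n)$ of rate at least $R-\epsilon/2$ has correct-decoding probability $1-P_e-P_X$ that vanishes exponentially. This holds because a code with erasures is a fortiori a decoder whose correct-decoding probability obeys the standard Dueck--K\"orner/Arimoto bound. That would contradict $P_e+P_X\to0$. Hence $I(QW)\ge R$, and therefore $E\le\sup_{Q:I(QW)\ge R}\overline E_{\mathrm e}(Q,R)$. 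Taking the supremum over code sequences gives \eqref{eq:arbitrary-converse}.

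\emph{Main obstacle.} The delicate point is Step 2: applying Theorem~\ref{thm:main} when the composition only converges to $Q$ along a subsequence. This requires checking that the theorem's conclusion is insensitive to the $o(1)$ type drift. I would handle it via the stated convention, together with continuity of the quantities in Lemma~\ref{lem:list}. If needed, I would instead invoke upper semicontinuity of $\overline E_{\mathrm e}(\cdot,R)$ in $Q$ by taking fixed auxiliary channels $P_{YZ|X}$ in the infimum.
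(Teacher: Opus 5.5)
Your proposal follows the paper's proof: keep the most populous type class, turn the discarded decoding regions into erasures, pass to a subsequence along which the type converges (the paper's convention for $\mcl C_n\subseteq\mcl T^n(Q)$ already covers this drift), and apply Theorem~\ref{thm:main}. One slip: the maximal erasure probability does not stay unchanged as you claim; it can increase, but by at most $P_e(\mcl C_n)$, as your own bound shows. The only real difference is Step~3. The paper obtains $I(QW)\ge R$ from Fano's inequality, since the probability of not decoding correctly is at most $P_e+P_X\to0$ and $\frac1n I(\bs X;\bs Y)\le I(Q_nW)$ for codes in $\mcl T^n(Q_n)$. Your appeal to the Dueck--K\"orner/Arimoto strong converse, together with the uniformity over the drifting types $Q_n$ that it requires, is valid but unnecessary.
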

We first show that our bound recovers Telatar's rate-zero converse.
For every $R>0$, a constant $Z$ satisfies the constraint $I_P(X;Z)<R$ in \eqref{eq:main-bound}, giving
\begin{align}
 \overline E_{\mathrm e}(Q,R)
 \le\operatorname{cav}\mcl F_W(Q)-R. 
 \label{eq:telatar-recovered}
\end{align}
Since $\max_Q\operatorname{cav}\mcl F_W(Q)=\mcl F_W$, Corollary~\ref{cor:arbitrary} gives $E_{\mathrm u}(R)\le\mcl F_W-R$, and letting $R\downarrow0$ recovers Telatar's bound \eqref{eq:intro-zero-rate}.

We next recover the straight-line converse bound in \eqref{eq:intro-straight-line} using an auxiliary channel adapted from~\cite[Eq.~(25)]{SomekhBaruch2024}.
\begin{proposition}\label{prop:line}
For $0<R\le I(QW)$,
\begin{align}
 \overline E_{\mathrm e}(Q,R)
 &\le\left(1-\frac RC\right) \mcl F_W.
 \label{eq:line-recovered}
\end{align}
\end{proposition}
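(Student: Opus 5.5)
To prove Proposition~\ref{prop:line}, I would plug an erasure-type auxiliary channel, modeled on~\cite[Eq.~(25)]{SomekhBaruch2024}, into the infimum \eqref{eq:main-bound}. Let $\mcl Z=\mcl Y\cup\{\mathrm e\}$ and fix $\lambda\in(0,1)$. The idealized channel is as follows: given $X=x$, draw $Y\sim W(\cdot|x)$; then set $Z=Y$ with probability $\lambda$ and $Z=\mathrm e$ otherwise, independently of everything else. This gives $P_{Y|X}=W$ and $I_P(X;Z)=\lambda I(QW)$.

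Condition on $Z$ in the two branches.
\begin{itemize}
\item On $Z=\mathrm e$ (probability $1-\lambda$), we have $P_{X|Z=\mathrm e}=Q$ and $P_{Y|X,Z=\mathrm e}=W$. The corresponding term is $\operatorname{cav}\mcl F_W(Q)\le\max_{Q'}\mcl F_W(Q')=\mcl F_W$, because the concave envelope of a function is bounded by its maximum.
\item On $Z=y$, the output is deterministic, so $P_{Y|X,Z=y}$ does not depend on $x$. Every divergence in \eqref{eq:F} then vanishes, and $\operatorname{cav}\mcl F=0$.
\end{itemize}
The objective therefore becomes $(1-\lambda)\mcl F_W+\lambda I(QW)-R$, subject to $\lambda I(QW)<R$. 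Letting $\lambda\uparrow R/I(QW)$ gives
\begin{align*}
\overline E_{\mathrm e}(Q,R)\le\Bigl(1-\tfrac{R}{I(QW)}\Bigr)\mcl F_W\le\Bigl(1-\tfrac RC\Bigr)\mcl F_W,
\end{align*}
where the last step uses $I(QW)\le C$. When $R=I(QW)$, any $\lambda<1$ is admissible, so taking $\lambda\uparrow1$ yields a value tending to $0$, which is still $\le(1-R/C)\mcl F_W$.

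The main obstacle is that the infimum runs only over \emph{strictly positive} auxiliary channels, and the idealized channel is not positive because $Z=y$ forces $Y=y$. I would fix this with an $\varepsilon$-perturbation that keeps $P_{Y|X}=W$ exactly. In the $\lambda$ branch, draw $Y\sim W(\cdot|x)$ and set $Z=Y$ with probability $1-\varepsilon$, or $Z$ uniform on $\mcl Y$ otherwise. The $\mathrm e$ branch is unchanged. Every entry of $P_{YZ|X}$ is then positive, the $\mathrm e$-conditioned quantities are exactly as before, and $I_P(X;Z)\le\lambda I(QW)$ by data processing, since the uniform noise only degrades $Z$.

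It then remains to show that, for $z\in\mcl Y$,
\begin{align*}
\max_{x,\bar x}D\bigl(P_{Y|X=x,Z=z}\,\big\|\,P_{Y|X=\bar x,Z=z}\bigr)\to0 \quad\text{as }\varepsilon\to0.
\end{align*}
I would argue this in two steps.
\begin{itemize}
\item Both conditional laws are proportional to $W(y|x)\bigl[(1-\varepsilon)\one\{y=z\}+\varepsilon/|\mcl Y|\bigr]$. Since $W>0$, each places mass $1-O(\varepsilon)$ on $y=z$.
\item The likelihood ratio between $x$ and $\bar x$ is bounded uniformly in $\varepsilon$ by $\max_{y,x,\bar x}W(y|x)/W(y|\bar x)$. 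Hence the divergence is $O(\varepsilon\log(1/\varepsilon))+O(\varepsilon)\to0$.
\end{itemize}
Consequently $\mcl F_{P_{Y|XZ=z}}(\cdot)\to0$ uniformly over input distributions. Its concave envelope is bounded by the same uniform bound, so it also tends to $0$.

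The objective is thus at most $(1-\lambda)\mcl F_W+o_\varepsilon(1)+\lambda I(QW)-R$. Letting $\varepsilon\to0$ and then $\lambda\uparrow R/I(QW)$ gives \eqref{eq:line-recovered}.
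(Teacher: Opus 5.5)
Your proof is correct. It follows the paper's overall strategy: the same Somekh-Baruch erasure-type auxiliary channel on $\mcl Z=\mcl Y\cup\{e\}$, an $\varepsilon$-perturbation to make it strictly positive, and the same limits.

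The difference is the choice of perturbation and how it is controlled. The paper mixes in an independent fresh channel output, $P_{Z|XY}(z|x,y)=\alpha[(1-\varepsilon)\delta_y(z)+\varepsilon W(z|x)]$ for $z\in\mcl Y$. This buys two things.
\begin{itemize}
\item $P_{Z|X}=(1-\alpha)\delta_e+\alpha W$ holds exactly, so $I_P(X;Z)=\alpha I(QW)$ with equality.
\item The conditional channel is the explicit mixture $P_{Y|XZ=z}=(1-\varepsilon)\delta_z+\varepsilon W$. Joint convexity of divergence then gives the clean bound $\mcl F_{P_{Y|XZ=z}}\le\varepsilon\mcl F_W$ directly. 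A Jensen step collapses the $z\in\mcl Y$ terms to $\alpha\varepsilon\operatorname{cav}\mcl F_W(Q)$, and the $Q$-dependent bound $(1-R/I(QW))\operatorname{cav}\mcl F_W(Q)$ is kept until the last line.
\end{itemize}

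Your uniform noise does not depend on $x$. So you keep the Markov chain $X - Y - Z$ and get $I_P(X;Z)\le\lambda I(QW)$ by data processing, which is all you need. However, the conditional channel is no longer a mixture. This forces your hands-on estimate, which uses $W>0$ to get concentration of mass at $y=z$ and bounded likelihood ratios. In exchange, your bound on the $z\in\mcl Y$ terms is uniform over input distributions, so no Jensen step is needed.

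One small slip: the likelihood ratio $P_{Y|XZ}(y|x,z)/P_{Y|XZ}(y|\bar x,z)$ is bounded by $\Lambda^2$, not by $\Lambda=\max_{y,x,\bar x}W(y|x)/W(y|\bar x)$. This is because the normalizations $(1-\varepsilon)W(z|x)+\varepsilon/|\mcl Y|$ also differ between $x$ and $\bar x$. For $y\neq z$ the ratio tends to $W(y|x)W(z|\bar x)/\bigl(W(y|\bar x)W(z|x)\bigr)$, which equals $\Lambda^2$ for a BSC. The bound is still uniform in $\varepsilon$, and the divergence is in fact $O(\varepsilon)$, so your argument goes through unchanged.
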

Hence, Corollary~\ref{cor:arbitrary} recovers the straight-line bound in \eqref{eq:intro-straight-line}.
The proofs of Proposition~\ref{prop:line} and Corollary~\ref{cor:arbitrary} are given in the Appendix.

\subsection{Numerical Example}

We illustrate the improvement over the straight-line bound for a BSC with crossover probability $p=0.1$.

We consider a binary auxiliary alphabet $\mcl Z=\{0,1\}$ and choose an auxiliary channel of the symmetric form
\begin{align}
 \bigl(P_{Z|XY}(1|x,y)\bigr)_{x,y}
 =\begin{pmatrix}
 a_R&b_R\\
 1-b_R&1-a_R
 \end{pmatrix},
 \label{eq:optimized-kernel}
\end{align}
with $P_{YZ|X}=WP_{Z|XY}$, where the parameters $a_R,b_R\in(0,1)$ may depend on the rate $R$.
For binary inputs, $\mcl F_V(Q)$ is concave in $Q$, so that $\operatorname{cav}\mcl F_V=\mcl F_V$.
For each fixed auxiliary channel of this form, symmetry and concavity imply that the expression in~\eqref{eq:main-bound} is maximized by the uniform input distribution.
Thus, evaluating it at the uniform input gives an upper bound for arbitrary codes.

To choose the parameters, let $r\in(p,1/2)$ satisfy $R=\log2-h(r)$ and impose $(1-p)a_R+pb_R=r$.
This makes $P_{Z|X}$ a BSC with crossover probability $r$ and leaves one parameter to optimize numerically.
The boundary choice $I_P(X;Z)=R$ is approached through channels satisfying $I_P(X;Z)<R$.

Fig.~\ref{fig:bsc} compares the resulting upper bound with the straight-line converse and Telatar's achievable bound~\cite[Sec.~4.4.1]{Telatar1992}, which for this channel is given by
\begin{align}
 E_{\mathrm u}(R)
 \ge(r-p)\log\frac{1-p}{p},
 \label{eq:bsc-achievability}
\end{align}
where $r\in(p,1/2)$ is the unique solution of $R=\log2-h(r)$.
At $R=C/2$, the choice $a_R\simeq0.143679$ and $b_R\simeq0.770647$ gives an upper bound of approximately $0.3947$, which is strictly below the straight-line bound of $0.4394$.

\begin{figure}[t]
 \centering
 \input{bsc-figure.tex}
 \caption{Bounds on the undetected-error exponent for the BSC$(0.1)$: our upper bound (solid), the straight-line converse (dashed), and Telatar's achievable bound (dash-dotted).}
 \label{fig:bsc}
\end{figure}

\section{Proof of Theorem~\ref{thm:main}}\label{sec:proof}

We first show that, with high probability, most messages in the genie list have small conditional probability of error or erasure.

\begin{lemma}\label{lem:reliable-sublist}
Under the assumptions of Theorem~\ref{thm:main}, 
for every $0<\delta<1$, with high probability, the list $\mcl L$ contains a sublist $\mcl L'$ such that
\begin{align}
 L'\triangleq|\mcl L'|&\ge(1-\delta)L,\\
 P_{Y|XZ}^n(\mcl D_m|\bs x_m,\bs z)&\ge1-\delta,
 \quad m\in\mcl L'.
 \label{eq:reliable-list}
\end{align}
\end{lemma}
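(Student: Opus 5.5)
The idea is a Markov-inequality argument over the joint randomness of the message and the genie observation $(\bs Z,\hatP_{\bs{XZ}})$. Write $\varepsilon_n\triangleq P_e(\mcl C_n)+P_X(\mcl C_n)$, which tends to $0$ by assumption. For each message $m$, the probability that the decoder does not output $m$ when $\bs x_m$ is sent is
\begin{align}
 1-W^n(\mcl D_m|\bs x_m)\le\varepsilon_n .
\end{align}
Since $P_{Y|X}=W$, averaging over $\bs Z$ gives
\begin{align}
 \mathbb E\bigl[1-P_{Y|XZ}^n(\mcl D_m|\bs x_m,\bs Z)\bigr]=1-W^n(\mcl D_m|\bs x_m),
\end{align}
where the expectation is over $\bs Z\sim P_{Z|X}^n(\cdot|\bs x_m)$. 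Averaging further over the equiprobable message, the quantity $g(M,\bs Z)\triangleq 1-P_{Y|XZ}^n(\mcl D_M|\bs x_M,\bs Z)$ has mean at most $\varepsilon_n$.

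The next step conditions on the genie observation. Let $\mcl G=(\bs Z,\hatP_{\bs{XZ}})$. As noted before Lemma~\ref{lem:list}, conditional on $\mcl G$ the message is uniform over $\mcl L$. Hence
\begin{align}
 \mathbb E[g\,|\,\mcl G]=\frac1L\sum_{m\in\mcl L}\bigl(1-P_{Y|XZ}^n(\mcl D_m|\bs x_m,\bs z)\bigr).
\end{align}
Here $g$ depends on $m$ and $\bs z$ only, and $\bs z$ is fixed given $\mcl G$. Taking expectations, $\mathbb E\bigl[\mathbb E[g|\mcl G]\bigr]\le\varepsilon_n$, so by Markov's inequality $\mathbb E[g|\mcl G]\le\sqrt{\varepsilon_n}$ with probability at least $1-\sqrt{\varepsilon_n}\to1$. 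We call such observations good.

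On a good observation, let $\mcl L'$ be the set of $m\in\mcl L$ with $1-P_{Y|XZ}^n(\mcl D_m|\bs x_m,\bs z)\le\delta$. Applying Markov's inequality once more, now to the uniform average over $\mcl L$, the fraction of list members outside $\mcl L'$ is at most $\sqrt{\varepsilon_n}/\delta$. For $n$ large this is below $\delta$, which yields both $L'\ge(1-\delta)L$ and \eqref{eq:reliable-list}, and completes the argument.

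The main obstacle I anticipate is justifying that the message is exactly uniform over $\mcl L$ given $\mcl G$, with the stated conditional output law. This requires that $P^n_{Z|X}(\bs z|\bs x_m)$ be the same for all $m$ having the same joint type with $\bs z$. That holds because the probability depends only on the joint type, and all codewords in $\mcl L$ share it. Given that fact, the remaining steps are two routine applications of Markov's inequality.
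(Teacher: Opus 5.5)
Your proposal is correct and follows essentially the same route as the paper: you identify $\mathbb E[g\,|\,\mcl G]$ with the paper's conditional failure probability $G(\bs z,\hat P)$, bound its mean by $P_e(\mcl C_n)+P_X(\mcl C_n)$, and apply Markov's inequality twice. The differences are cosmetic. The paper uses the fixed threshold $\delta^2$ in the first Markov step, whereas your vanishing threshold $\sqrt{\varepsilon_n}$ simply requires $n$ large enough that $\sqrt{\varepsilon_n}<\delta^2$. Your explicit check that $P_{Z|X}^n(\bs z|\bs x_m)$ depends only on the joint type justifies the uniformity over $\mcl L$, which the paper states without proof.
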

\begin{proof}
Given the genie observation $(\bs Z,\hatP_{\bs{XZ}})=(\bs z,\hat P)$, let
\begin{align}
 G(\bs z,\hat P)
 =\frac1L\sum_{m\in\mcl L}
 \bigl[1-P_{Y|XZ}^n(\mcl D_m|\bs x_m,\bs z)\bigr].
 \label{eq:conditional-failure}
\end{align}
Since the transmitted message is equiprobable over $\mcl L$, this is the conditional probability of an undetected error or an erasure.
Averaging over the genie observations gives
\begin{align}
 \mathbb{E}[G(\bs Z,\hatP_{\bs{XZ}})]
 &=\frac1{M_n}\sum_{m=1}^{M_n}
 \bigl[1-W^n(\mcl D_m|\bs x_m)\bigr]\\
 &\le P_e(\mcl C_n)+P_X(\mcl C_n)=o(1).
 \label{eq:average-failure}
\end{align}
By Markov's inequality,
\begin{align}
 \mathbb{P}\bigl[G(\bs Z,\hatP_{\bs{XZ}})>\delta^2\bigr]
 \le\frac{\mathbb{E}[G(\bs Z,\hatP_{\bs{XZ}})]}{\delta^2}
 =o(1).
 \label{eq:markov-sublist}
\end{align}
Whenever $G(\bs z,\hat P)\le\delta^2$, fewer than $\delta L$ messages have conditional probability of error or erasure greater than $\delta$.
Removing these messages gives the required sublist.
\end{proof}

We now proceed to prove Theorem~\ref{thm:main}.
Let $R>0$, and fix a strictly positive auxiliary broadcast channel $P_{YZ|X}$ with $P_{Y|X}=W$ and $I_P(X;Z)<R$.
Choose $0<\delta<\min\{1,R-I_P(X;Z)\}$.
By Lemmas~\ref{lem:list} and~\ref{lem:reliable-sublist}, with high probability, the list satisfies~\eqref{eq:large-list} and contains a sublist satisfying~\eqref{eq:reliable-list}.
Fix a genie observation $(\bs Z,\hatP_{\bs{XZ}})=(\bs z,\hatP_{XZ})$ for which~\eqref{eq:large-list} and~\eqref{eq:reliable-list} hold.
Here, $\hatP_{XZ}$ is the joint type of the transmitted codeword and $\bs z$.
By the definition of the list, every codeword in $\mcl L$ has this same joint type with $\bs z$.
For $m\in\mcl L$, let
\begin{align}
 P_m=P_{Y|XZ}^n(\cdot|\bs x_m,\bs z)
 \label{eq:output-distribution}
\end{align}
denote the conditional output distribution of message $m$, so that~\eqref{eq:reliable-list} reads $P_m(\mcl D_m)\ge1-\delta$.
For $m,\bar m\in\mcl L'$, define
\begin{align}
 E_{m\bar m}
 \triangleq D(P_m\|P_{\bar m}).
 \label{eq:pair-divergence}
\end{align}
We first use these divergences to lower-bound the probability of decoding one message as another.
For $m\ne\bar m$, let $u=P_m(\mcl D_m)$ and $v=P_{\bar m}(\mcl D_m)$.
By data processing \cite{Telatar1992} (see also \cite[Ch.~7]{PolyanskiyWu2025}),
\begin{align}
 E_{m\bar m}
 &\ge D(\operatorname{Bern}(u)\|\operatorname{Bern}(v))\\
 &\ge-h(u)-u\log v.
 \label{eq:binary-dp-step}
\end{align}
By~\eqref{eq:reliable-list}, we have $u\ge1-\delta$.
Together with $h(u)\le\log2$, 
\begin{align}
 P_{\bar m}(\mcl D_m)
 \ge\exp\left(-\frac{E_{m\bar m}+\log2}{1-\delta}\right).
 \label{eq:pair-error}
\end{align}
We next bound the average of these divergences over $\mcl L'$.
For each coordinate $i$, write $V_i(\cdot|x)=P_{Y|XZ}(\cdot|x,z_i)$.
By memorylessness, the divergence in~\eqref{eq:pair-divergence} is the sum of the coordinatewise divergences, that is,
\begin{align}
 E_{m\bar m}
 =\sum_{i=1}^n
 D\bigl(V_i(\cdot|x_{m,i})\|V_i(\cdot|x_{\bar m,i})\bigr).
 \label{eq:divergence-coordinates}
\end{align}
Let $Q_i$ be the distribution of the $i$th symbol when a message is chosen equiprobably from $\mcl L'$, that is,
\begin{align}
 Q_i(x)=\frac1{L'}|\{m\in\mcl L':x_{m,i}=x\}|.
 \label{eq:coordinate-distribution}
\end{align}
At coordinate $i$, exactly $L'Q_i(x)$ messages have symbol $x$.
Hence, the average of the $i$th term over all ordered pairs is
\begin{align}
\begin{split}
 &\frac1{L'^2}\sum_{m,\bar m\in\mcl L'}
 D\bigl(V_i(\cdot|x_{m,i})\|V_i(\cdot|x_{\bar m,i})\bigr)\\
 &\quad\qquad=\sum_{(x,\bar x)}Q_i(x)Q_i(\bar x)\,
 D\bigl(V_i(\cdot|x)\|V_i(\cdot|\bar x)\bigr)
\end{split}\\
 &\quad\qquad=\mcl F_{V_i}(Q_i).
 \label{eq:all-pair-coordinate}
\end{align}
Summing over the coordinates therefore gives
\begin{align}
 \frac1{nL'^2}\sum_{m,\bar m\in\mcl L'}E_{m\bar m}
 =\frac1n\sum_{i=1}^n\mcl F_{V_i}(Q_i).
 \label{eq:pair-identity}
\end{align}
We now group the coordinates according to $z_i$.
All codewords in $\mcl L'$ have joint type $\hatP_{XZ}$ with $\bs z$.
Thus, for each $(x,z)$,
\begin{align}
 \sum_{i:z_i=z}Q_i(x)
 &=\frac1{L'}\sum_{m\in\mcl L'}
 |\{i:(x_{m,i},z_i)=(x,z)\}|\\
 &=n\hatP_{XZ}(x,z).
 \label{eq:count-coordinate}
\end{align}
Dividing by $n\hatP_Z(z)$ gives, for $\hatP_Z(z)>0$,
\begin{align}
 \frac1{n\hatP_Z(z)}\sum_{i:z_i=z}Q_i
 =\hatP_{X|Z=z}.
 \label{eq:coordinate-average}
\end{align}
The conditional channel is the same at all coordinates with $z_i=z$, namely $V_i=P_{Y|XZ=z}$.
Bounding $\mcl F$ by its upper concave envelope and applying Jensen's inequality gives
\begin{align}
\begin{split}
	 &\frac{1}{n\hatP_Z(z)}
 \sum_{i:z_i=z}\mcl F_{P_{Y|XZ=z}}(Q_i)
 \\
 &\quad\qquad\le
 \operatorname{cav}\mcl F_{P_{Y|XZ=z}}
 \left(\frac{1}{n\hatP_Z(z)}
 \sum_{i:z_i=z}Q_i\right)
 \end{split}
 \\
 &\quad\qquad=
 \operatorname{cav}\mcl F_{P_{Y|XZ=z}}
 (\hatP_{X|Z=z}),
 \label{eq:group-concavity}
\end{align}
where the equality follows from~\eqref{eq:coordinate-average}.
Multiplying by $\hatP_Z(z)$ and summing over $z$ gives
\begin{align}
 \frac1n\sum_{i=1}^n\mcl F_{V_i}(Q_i)
 \le\sum_z\hatP_Z(z)\operatorname{cav}
 \mcl F_{P_{Y|XZ=z}}(\hatP_{X|Z=z}).
 \label{eq:concave-average}
\end{align}
Since $E_{mm}=0$, restricting the average in~\eqref{eq:pair-identity} to distinct messages introduces a factor $L'/(L'-1)$, which tends to one.
Also, $\hatP_{XZ}=P_{XZ}+o(1)$ by Lemma~\ref{lem:list}.
The auxiliary channel is fixed and strictly positive, so continuity gives
\begin{align}
\begin{split}
	 &\frac1{nL'(L'-1)}
 \sum_{\substack{m,\bar m\in\mcl L'\\m\ne\bar m}}E_{m\bar m}
 \\
 &\qquad\le\sum_zP_Z(z)\operatorname{cav}
 \mcl F_{P_{Y|XZ=z}}(P_{X|Z=z})+o(1),
 \label{eq:pair-average}
\end{split}
\end{align}
where the $o(1)$ term is uniform over the observations under consideration.
 
We now combine~\eqref{eq:pair-error} and~\eqref{eq:pair-average}.
The decoding regions are disjoint, and the transmitted message is equiprobable over the full list $\mcl L$.
Hence, conditioned on the genie observation, the undetected-error probability is at least
\begin{align}
 \frac1L\sum_{\bar m\in\mcl L'}
 \sum_{\substack{m\in\mcl L'\\m\ne\bar m}}
 P_{\bar m}(\mcl D_m).
 \label{eq:sum-regions}
\end{align}
Substituting~\eqref{eq:pair-error} and applying Jensen's inequality bounds this expression from below by
\begin{align}
 \frac{L'(L'-1)}L
 \exp\left(-\frac{1}{1-\delta}
 \left[\frac{\sum_{m\ne\bar m}E_{m\bar m}}{L'(L'-1)}+\log2\right]\right),
 \label{eq:jensen}
\end{align}
where the sum is over distinct messages in $\mcl L'$.
Since $L'\ge(1-\delta)L$, the prefactor has the same exponential rate as $L$, which is at least $R-I_P(X;Z)-\delta$ by~\eqref{eq:large-list}.
Averaging over the genie observations and using~\eqref{eq:pair-average}, we obtain
\begin{align} \label{eq:fixed-genie-bound}
\begin{split}
	 &\limsup_{n\to\infty}-\frac1n\log P_e(\mcl C_n)
 \\
 &\qquad\le\frac1{1-\delta}\sum_zP_Z(z)
 \operatorname{cav}\mcl F_{P_{Y|XZ=z}}(P_{X|Z=z})
\end{split}
 \\
 &\qquad\qquad\qquad+I_P(X;Z)-R+\delta. \nonumber
\end{align}
Here, the observations used above have probability $1-o(1)$, and the average undetected-error probability is at most $P_e(\mcl C_n)$.
Letting $\delta\downarrow0$ and taking the infimum over all feasible auxiliary channels and finite alphabets $\mcl Z$ gives~\eqref{eq:cc-converse}.

\appendix

\begin{proof}[Proof of Proposition~\ref{prop:line}]
Let $\mcl Z=\mcl Y\cup\{e\}$, with $e\notin\mcl Y$, and choose
$P_{YZ|X}(y,z|x)=W(y|x)P_{Z|XY}(z|x,y)$, where
\begin{align}
 P_{Z|XY}(z|x,y)
 =\begin{cases}
 1-\alpha,&z=e,\\
 \alpha[(1-\varepsilon)\delta_y(z)
 +\varepsilon W(z|x)],&z\in\mcl Y.
 \end{cases}
 \label{eq:reveal-kernel}
\end{align}
Here, $0<\alpha<R/I(QW)$, $0<\varepsilon<1$, and $\delta_y$ is the point mass at $y$.
For $\varepsilon=0$, this is the auxiliary channel in~\cite[Eq.~(25)]{SomekhBaruch2024}, and $\varepsilon>0$ makes it strictly positive.
Summing over $z$ gives $P_{Y|X}=W$, as required.
The other marginal is
\begin{align} \label{eq--17}
 P_{Z|X}(z|x)
 =\begin{cases}
 1-\alpha,&z=e,\\
 \alpha W(z|x),&z\in\mcl Y.
 \end{cases}
\end{align} 
Thus, $Z=e$ with probability $1-\alpha$, independently of $X$, and otherwise $Z$ is distributed according to $W(\cdot|X)$.
It follows that $I_P(X;Z)=\alpha I(QW)<R$.

When $Z=e$, we have $P_{X|Z=e}=Q$ and $P_{Y|XZ=e}=W$. For $z\in\mathcal{Y}$, the conditional channel follows from \eqref{eq--8}, \eqref{eq:reveal-kernel} and \eqref{eq--17} as
\begin{align}
  P_{Y|XZ}(y|x,z)
  &=\frac{W(y|x)\,\alpha\bigl[(1-\varepsilon)\delta_y(z)+\varepsilon W(z|x)\bigr]}{\alpha W(z|x)} \\
  &=(1-\varepsilon)\delta_z(y)+\varepsilon W(y|x),
  \label{eq:cond}
\end{align}
where we used that $\delta_y(z)W(y|x)=\delta_z(y)W(z|x)$.
This channel is a mixture of the point mass $\delta_z$, which does not depend on $x$, and $W$. By joint convexity of divergence, for every pair $(x,\bar x)$,
\begin{align}
  &D\bigl(P_{Y|XZ}(\cdot|x,z)\,\big\|\,P_{Y|XZ}(\cdot|\bar x,z)\bigr) \\
  &\quad\qquad\le(1-\varepsilon)D(\delta_z\|\delta_z)+\varepsilon D\bigl(W(\cdot|x)\,\big\|\,W(\cdot|\bar x)\bigr) \\
  &\quad\qquad=\varepsilon D\bigl(W(\cdot|x)\,\big\|\,W(\cdot|\bar x)\bigr).  \label{eq:conv}
\end{align}
Multiplying by $Q'(x)Q'(\bar x)$ and summing over $(x,\bar x)$ gives, for every input distribution $Q'$,
\begin{equation}
  \mathcal{F}_{P_{Y|XZ=z}}(Q')\le\varepsilon\mathcal{F}_W(Q')\le\varepsilon\cav\mathcal{F}_W(Q').
  \label{eq:Fbound}
\end{equation}
Since $\varepsilon\cav\mathcal{F}_W$ is concave, taking the upper concave envelope of the left-hand side gives
\begin{equation}
  \cav\mathcal{F}_{P_{Y|XZ=z}}(Q')\le\varepsilon\cav\mathcal{F}_W(Q'),\quad z\in\mathcal{Y}.
  \label{eq:cavbound}
\end{equation}
We now average over $Z$. By \eqref{eq--17}, $P_Z(e)=1-\alpha$, so the term $z=e$ contributes $(1-\alpha)\cav\mathcal{F}_W(Q)$ to the first term in \eqref{eq:main-bound}. 
Also, $P_{XZ}(x,z)=\alpha Q(x)W(z|x)$ for $z\in\mathcal{Y}$, so that
\begin{equation}
  \sum_{z\in\mathcal{Y}}P_Z(z)=\alpha,\quad
  \sum_{z\in\mathcal{Y}}P_Z(z)P_{X|Z=z}=\alpha Q.
  \label{eq:mean}
\end{equation}
The remaining terms therefore satisfy
\begin{align}
\begin{split} \label{eq:useC}
	  &\sum_{z\in\mathcal{Y}}P_Z(z)\cav\mathcal{F}_{P_{Y|XZ=z}}\bigl(P_{X|Z=z}\bigr) \\
  &\quad\qquad\le\varepsilon\sum_{z\in\mathcal{Y}}P_Z(z)\cav\mathcal{F}_W\bigl(P_{X|Z=z}\bigr)
\end{split}\\
  &\quad\qquad\le\varepsilon\alpha\cav\mathcal{F}_W\biggl(\frac{1}{\alpha}\sum_{z\in\mathcal{Y}}P_Z(z)P_{X|Z=z}\biggr) \label{eq:useJ}\\
  &\quad\qquad=\alpha\varepsilon\cav\mathcal{F}_W(Q),
  \label{eq:rest}
\end{align}
where \eqref{eq:useC} follows from \eqref{eq:cavbound}, \eqref{eq:useJ} from the concavity of $\cav\mathcal{F}_W$ and Jensen's inequality, and \eqref{eq:rest} from \eqref{eq:mean}. 
The first term in \eqref{eq:main-bound} is therefore at most $(1-\alpha+\alpha\varepsilon)\cav\mathcal{F}_W(Q)$. 
Substituting this and $I_P(X;Z)=\alpha I(QW)$ into \eqref{eq:main-bound} gives
\begin{equation}
  \overline{E}_{\mathrm e}(Q,R)\le(1-\alpha+\alpha\varepsilon)\cav\mathcal{F}_W(Q)+\alpha I(QW)-R.
\end{equation}
Letting $\varepsilon\downarrow0$ and $\alpha\uparrow R/I(QW)$, and using $I(QW)\le C$ and
$\operatorname{cav}\mcl F_W(Q)\le \mcl F_W$, yields~\eqref{eq:line-recovered}.
\end{proof}

\begin{proof}[Proof of Corollary~\ref{cor:arbitrary}]
Some type class contains at least $M_n/(n+1)^{|\mcl X|}$ codewords of $\mcl C_n$ (see~\cite[p.~42]{Telatar1992}).
Retain these codewords and their decoding regions, and turn the other decoding regions into erasures.
The rate loss is negligible, the undetected-error probability does not increase, and the erasure probability increases by at most $P_e(\mcl C_n)$, so it remains $o(1)$.
By passing to a subsequence, we may assume that the type of the retained codewords converges to some $Q$.
Since the retained codes have rate at least $R$ and are decoded reliably, Fano's inequality gives $I(QW)\ge R$.
By Theorem~\ref{thm:main}, their undetected-error exponent is at most $\overline E_{\mathrm e}(Q,R)$.
The same bound holds for $\mcl C_n$, since its undetected-error probability is at least as large.
\end{proof}

\section*{Acknowledgements}
The initial bound construction and proofs were developed by the authors. 
OpenAI GPT-6 Astra identified and helped correct a technical error in Lemma 2 and its use in Theorem 1. 
The model was also used to check the remaining proofs, refine the presentation of technical arguments, comparison with the straight-line bound, and assist with the numerical results. 
All AI-assisted outputs were independently verified by the authors.

{\footnotesize
\bibliographystyle{IEEEtran}
\bibliography{refs}
}

\end{document}

%% file: bsc-figure.tex
\definecolor{boundblue}{HTML}{0072B2}
\definecolor{boundgrey}{HTML}{555555}
\definecolor{boundorange}{HTML}{D55E00}

\begin{tikzpicture}
\begin{axis}[
 width=0.92\columnwidth,height=0.7\columnwidth,
 xlabel={$R/C$},ylabel={Exponent (nats)},
 xmin=0,xmax=1,ymin=0,ymax=0.91,
 xtick={0,0.25,0.5,0.75,1},ytick={0,0.2,0.4,0.6,0.8},
 tick label style={font=\footnotesize},label style={font=\footnotesize},
 grid=major,major grid style={draw=gray!20},
 legend style={font=\footnotesize,draw=none,fill=none,at={(0.98,0.98)},anchor=north east},
 legend cell align=left,
]
\addplot[boundgrey,densely dashed,thick] coordinates {
(0.00000,0.8788898309)
(0.00500,0.8744953818)
(0.01000,0.8701009326)
(0.01500,0.8657064835)
(0.02000,0.8613120343)
(0.02500,0.8569175852)
(0.03000,0.8525231360)
(0.03500,0.8481286869)
(0.04000,0.8437342377)
(0.04500,0.8393397885)
(0.05000,0.8349453394)
(0.05500,0.8305508902)
(0.06000,0.8261564411)
(0.06500,0.8217619919)
(0.07000,0.8173675428)
(0.07500,0.8129730936)
(0.08000,0.8085786445)
(0.08500,0.8041841953)
(0.09000,0.7997897462)
(0.09500,0.7953952970)
(0.10000,0.7910008478)
(0.10500,0.7866063987)
(0.11000,0.7822119495)
(0.11500,0.7778175004)
(0.12000,0.7734230512)
(0.12500,0.7690286021)
(0.13000,0.7646341529)
(0.13500,0.7602397038)
(0.14000,0.7558452546)
(0.14500,0.7514508054)
(0.15000,0.7470563563)
(0.15500,0.7426619071)
(0.16000,0.7382674580)
(0.16500,0.7338730088)
(0.17000,0.7294785597)
(0.17500,0.7250841105)
(0.18000,0.7206896614)
(0.18500,0.7162952122)
(0.19000,0.7119007631)
(0.19500,0.7075063139)
(0.20000,0.7031118647)
(0.20500,0.6987174156)
(0.21000,0.6943229664)
(0.21500,0.6899285173)
(0.22000,0.6855340681)
(0.22500,0.6811396190)
(0.23000,0.6767451698)
(0.23500,0.6723507207)
(0.24000,0.6679562715)
(0.24500,0.6635618224)
(0.25000,0.6591673732)
(0.25500,0.6547729240)
(0.26000,0.6503784749)
(0.26500,0.6459840257)
(0.27000,0.6415895766)
(0.27500,0.6371951274)
(0.28000,0.6328006783)
(0.28500,0.6284062291)
(0.29000,0.6240117800)
(0.29500,0.6196173308)
(0.30000,0.6152228817)
(0.30500,0.6108284325)
(0.31000,0.6064339833)
(0.31500,0.6020395342)
(0.32000,0.5976450850)
(0.32500,0.5932506359)
(0.33000,0.5888561867)
(0.33500,0.5844617376)
(0.34000,0.5800672884)
(0.34500,0.5756728393)
(0.35000,0.5712783901)
(0.35500,0.5668839410)
(0.36000,0.5624894918)
(0.36500,0.5580950426)
(0.37000,0.5537005935)
(0.37500,0.5493061443)
(0.38000,0.5449116952)
(0.38500,0.5405172460)
(0.39000,0.5361227969)
(0.39500,0.5317283477)
(0.40000,0.5273338986)
(0.40500,0.5229394494)
(0.41000,0.5185450003)
(0.41500,0.5141505511)
(0.42000,0.5097561019)
(0.42500,0.5053616528)
(0.43000,0.5009672036)
(0.43500,0.4965727545)
(0.44000,0.4921783053)
(0.44500,0.4877838562)
(0.45000,0.4833894070)
(0.45500,0.4789949579)
(0.46000,0.4746005087)
(0.46500,0.4702060595)
(0.47000,0.4658116104)
(0.47500,0.4614171612)
(0.48000,0.4570227121)
(0.48500,0.4526282629)
(0.49000,0.4482338138)
(0.49500,0.4438393646)
(0.50000,0.4394449155)
(0.50500,0.4350504663)
(0.51000,0.4306560172)
(0.51500,0.4262615680)
(0.52000,0.4218671188)
(0.52500,0.4174726697)
(0.53000,0.4130782205)
(0.53500,0.4086837714)
(0.54000,0.4042893222)
(0.54500,0.3998948731)
(0.55000,0.3955004239)
(0.55500,0.3911059748)
(0.56000,0.3867115256)
(0.56500,0.3823170765)
(0.57000,0.3779226273)
(0.57500,0.3735281781)
(0.58000,0.3691337290)
(0.58500,0.3647392798)
(0.59000,0.3603448307)
(0.59500,0.3559503815)
(0.60000,0.3515559324)
(0.60500,0.3471614832)
(0.61000,0.3427670341)
(0.61500,0.3383725849)
(0.62000,0.3339781358)
(0.62500,0.3295836866)
(0.63000,0.3251892374)
(0.63500,0.3207947883)
(0.64000,0.3164003391)
(0.64500,0.3120058900)
(0.65000,0.3076114408)
(0.65500,0.3032169917)
(0.66000,0.2988225425)
(0.66500,0.2944280934)
(0.67000,0.2900336442)
(0.67500,0.2856391951)
(0.68000,0.2812447459)
(0.68500,0.2768502967)
(0.69000,0.2724558476)
(0.69500,0.2680613984)
(0.70000,0.2636669493)
(0.70500,0.2592725001)
(0.71000,0.2548780510)
(0.71500,0.2504836018)
(0.72000,0.2460891527)
(0.72500,0.2416947035)
(0.73000,0.2373002544)
(0.73500,0.2329058052)
(0.74000,0.2285113560)
(0.74500,0.2241169069)
(0.75000,0.2197224577)
(0.75500,0.2153280086)
(0.76000,0.2109335594)
(0.76500,0.2065391103)
(0.77000,0.2021446611)
(0.77500,0.1977502120)
(0.78000,0.1933557628)
(0.78500,0.1889613137)
(0.79000,0.1845668645)
(0.79500,0.1801724153)
(0.80000,0.1757779662)
(0.80500,0.1713835170)
(0.81000,0.1669890679)
(0.81500,0.1625946187)
(0.82000,0.1582001696)
(0.82500,0.1538057204)
(0.83000,0.1494112713)
(0.83500,0.1450168221)
(0.84000,0.1406223729)
(0.84500,0.1362279238)
(0.85000,0.1318334746)
(0.85500,0.1274390255)
(0.86000,0.1230445763)
(0.86500,0.1186501272)
(0.87000,0.1142556780)
(0.87500,0.1098612289)
(0.88000,0.1054667797)
(0.88500,0.1010723306)
(0.89000,0.0966778814)
(0.89500,0.0922834322)
(0.90000,0.0878889831)
(0.90500,0.0834945339)
(0.91000,0.0791000848)
(0.91500,0.0747056356)
(0.92000,0.0703111865)
(0.92500,0.0659167373)
(0.93000,0.0615222882)
(0.93500,0.0571278390)
(0.94000,0.0527333899)
(0.94500,0.0483389407)
(0.95000,0.0439444915)
(0.95500,0.0395500424)
(0.96000,0.0351555932)
(0.96500,0.0307611441)
(0.97000,0.0263666949)
(0.97500,0.0219722458)
(0.98000,0.0175777966)
(0.98500,0.0131833475)
(0.99000,0.0087888983)
(0.99500,0.0043944492)
(1.00000,0.0000000000)
};
\addplot[blue,solid,thick] coordinates {
(0.00000,0.8788898309)
(0.00500,0.8736045717)
(0.01000,0.8683282796)
(0.01500,0.8630609547)
(0.02000,0.8578025968)
(0.02500,0.8525532059)
(0.03000,0.8473127819)
(0.03500,0.8420813247)
(0.04000,0.8368588341)
(0.04500,0.8316453101)
(0.05000,0.8264407526)
(0.05500,0.8212451613)
(0.06000,0.8160585361)
(0.06500,0.8108808770)
(0.07000,0.8057121838)
(0.07500,0.8005524562)
(0.08000,0.7954016942)
(0.08500,0.7902598975)
(0.09000,0.7851270660)
(0.09500,0.7800031995)
(0.10000,0.7748882978)
(0.10500,0.7697823607)
(0.11000,0.7646853880)
(0.11500,0.7595973795)
(0.12000,0.7545183350)
(0.12500,0.7494482541)
(0.13000,0.7443871368)
(0.13500,0.7393349827)
(0.14000,0.7342917917)
(0.14500,0.7292575634)
(0.15000,0.7242322976)
(0.15500,0.7192159940)
(0.16000,0.7142086524)
(0.16500,0.7092102725)
(0.17000,0.7042208540)
(0.17500,0.6992403966)
(0.18000,0.6942689000)
(0.18500,0.6893063639)
(0.19000,0.6843527881)
(0.19500,0.6794081721)
(0.20000,0.6744725158)
(0.20500,0.6695458186)
(0.21000,0.6646280804)
(0.21500,0.6597193008)
(0.22000,0.6548194795)
(0.22500,0.6499286161)
(0.23000,0.6450467102)
(0.23500,0.6401737616)
(0.24000,0.6353097697)
(0.24500,0.6304547344)
(0.25000,0.6256086552)
(0.25500,0.6207715318)
(0.26000,0.6159433637)
(0.26500,0.6111241506)
(0.27000,0.6063138921)
(0.27500,0.6015125879)
(0.28000,0.5967202375)
(0.28500,0.5919368405)
(0.29000,0.5871623965)
(0.29500,0.5823969052)
(0.30000,0.5776403662)
(0.30500,0.5728927789)
(0.31000,0.5681541431)
(0.31500,0.5634244583)
(0.32000,0.5587037241)
(0.32500,0.5539919401)
(0.33000,0.5492891059)
(0.33500,0.5445952210)
(0.34000,0.5399102850)
(0.34500,0.5352342975)
(0.35000,0.5305672581)
(0.35500,0.5259091664)
(0.36000,0.5212600219)
(0.36500,0.5166198242)
(0.37000,0.5119885729)
(0.37500,0.5073662676)
(0.38000,0.5027529079)
(0.38500,0.4981484932)
(0.39000,0.4935530233)
(0.39500,0.4889664976)
(0.40000,0.4843889158)
(0.40500,0.4798202774)
(0.41000,0.4752605821)
(0.41500,0.4707098294)
(0.42000,0.4661680189)
(0.42500,0.4616351502)
(0.43000,0.4571112229)
(0.43500,0.4525962367)
(0.44000,0.4480901910)
(0.44500,0.4435930855)
(0.45000,0.4391049199)
(0.45500,0.4346256937)
(0.46000,0.4301554066)
(0.46500,0.4256940582)
(0.47000,0.4212416481)
(0.47500,0.4167981760)
(0.48000,0.4123636415)
(0.48500,0.4079380443)
(0.49000,0.4035213841)
(0.49500,0.3991136604)
(0.50000,0.3947148731)
(0.50500,0.3903250217)
(0.51000,0.3859441061)
(0.51500,0.3815721258)
(0.52000,0.3772090806)
(0.52500,0.3728549703)
(0.53000,0.3685097945)
(0.53500,0.3641735531)
(0.54000,0.3598462457)
(0.54500,0.3555278722)
(0.55000,0.3512184324)
(0.55500,0.3469179260)
(0.56000,0.3426263529)
(0.56500,0.3383437129)
(0.57000,0.3340700059)
(0.57500,0.3298052316)
(0.58000,0.3255493901)
(0.58500,0.3213024811)
(0.59000,0.3170645046)
(0.59500,0.3128354605)
(0.60000,0.3086153488)
(0.60500,0.3044041694)
(0.61000,0.3002019223)
(0.61500,0.2960086076)
(0.62000,0.2918242252)
(0.62500,0.2876487752)
(0.63000,0.2834822577)
(0.63500,0.2793246727)
(0.64000,0.2751760205)
(0.64500,0.2710363010)
(0.65000,0.2669055146)
(0.65500,0.2627836613)
(0.66000,0.2586707414)
(0.66500,0.2545667552)
(0.67000,0.2504717029)
(0.67500,0.2463855848)
(0.68000,0.2423084013)
(0.68500,0.2382401527)
(0.69000,0.2341808394)
(0.69500,0.2301304618)
(0.70000,0.2260890205)
(0.70500,0.2220565158)
(0.71000,0.2180329483)
(0.71500,0.2140183186)
(0.72000,0.2100126273)
(0.72500,0.2060158749)
(0.73000,0.2020280622)
(0.73500,0.1980491899)
(0.74000,0.1940792588)
(0.74500,0.1901182695)
(0.75000,0.1861662230)
(0.75500,0.1822231201)
(0.76000,0.1782889618)
(0.76500,0.1743637489)
(0.77000,0.1704474826)
(0.77500,0.1665401639)
(0.78000,0.1626417937)
(0.78500,0.1587523735)
(0.79000,0.1548719042)
(0.79500,0.1510003872)
(0.80000,0.1471378238)
(0.80500,0.1432842153)
(0.81000,0.1394395631)
(0.81500,0.1356038688)
(0.82000,0.1317771338)
(0.82500,0.1279593597)
(0.83000,0.1241505482)
(0.83500,0.1203507009)
(0.84000,0.1165598196)
(0.84500,0.1127779062)
(0.85000,0.1090049626)
(0.85500,0.1052409906)
(0.86000,0.1014859923)
(0.86500,0.0977399699)
(0.87000,0.0940029254)
(0.87500,0.0902748611)
(0.88000,0.0865557794)
(0.88500,0.0828456825)
(0.89000,0.0791445730)
(0.89500,0.0754524534)
(0.90000,0.0717693263)
(0.90500,0.0680951943)
(0.91000,0.0644300604)
(0.91500,0.0607739272)
(0.92000,0.0571267978)
(0.92500,0.0534886752)
(0.93000,0.0498595626)
(0.93500,0.0462394630)
(0.94000,0.0426283799)
(0.94500,0.0390263166)
(0.95000,0.0354332766)
(0.95500,0.0318492636)
(0.96000,0.0282742811)
(0.96500,0.0247083330)
(0.97000,0.0211514232)
(0.97500,0.0176035557)
(0.98000,0.0140647346)
(0.98500,0.0105349641)
(0.99000,0.0070142485)
(0.99500,0.0035025923)
(1.00000,0.0000000000)
};
\addplot[boundorange,dashdotted,thick] coordinates {
(0.00000,0.8788898309)
(0.00500,0.8122593474)
(0.01000,0.7846890431)
(0.01500,0.7635533787)
(0.02000,0.7457517685)
(0.02500,0.7300828294)
(0.03000,0.7159302249)
(0.03500,0.7029277097)
(0.04000,0.6908365825)
(0.04500,0.6794909908)
(0.05000,0.6687701421)
(0.05500,0.6585828212)
(0.06000,0.6488581418)
(0.06500,0.6395397155)
(0.07000,0.6305818092)
(0.07500,0.6219467218)
(0.08000,0.6136029362)
(0.08500,0.6055237852)
(0.09000,0.5976864662)
(0.09500,0.5900712992)
(0.10000,0.5826611583)
(0.10500,0.5754410305)
(0.11000,0.5683976669)
(0.11500,0.5615193056)
(0.12000,0.5547954475)
(0.12500,0.5482166742)
(0.13000,0.5417744985)
(0.13500,0.5354612398)
(0.14000,0.5292699205)
(0.14500,0.5231941797)
(0.15000,0.5172281985)
(0.15500,0.5113666381)
(0.16000,0.5056045854)
(0.16500,0.4999375075)
(0.17000,0.4943612113)
(0.17500,0.4888718089)
(0.18000,0.4834656874)
(0.18500,0.4781394822)
(0.19000,0.4728900540)
(0.19500,0.4677144680)
(0.20000,0.4626099757)
(0.20500,0.4575739988)
(0.21000,0.4526041147)
(0.21500,0.4476980436)
(0.22000,0.4428536373)
(0.22500,0.4380688683)
(0.23000,0.4333418207)
(0.23500,0.4286706820)
(0.24000,0.4240537350)
(0.24500,0.4194893515)
(0.25000,0.4149759852)
(0.25500,0.4105121668)
(0.26000,0.4060964983)
(0.26500,0.4017276482)
(0.27000,0.3974043475)
(0.27500,0.3931253853)
(0.28000,0.3888896053)
(0.28500,0.3846959024)
(0.29000,0.3805432197)
(0.29500,0.3764305453)
(0.30000,0.3723569100)
(0.30500,0.3683213845)
(0.31000,0.3643230775)
(0.31500,0.3603611330)
(0.32000,0.3564347291)
(0.32500,0.3525430756)
(0.33000,0.3486854124)
(0.33500,0.3448610080)
(0.34000,0.3410691581)
(0.34500,0.3373091841)
(0.35000,0.3335804317)
(0.35500,0.3298822699)
(0.36000,0.3262140896)
(0.36500,0.3225753030)
(0.37000,0.3189653420)
(0.37500,0.3153836578)
(0.38000,0.3118297196)
(0.38500,0.3083030141)
(0.39000,0.3048030444)
(0.39500,0.3013293297)
(0.40000,0.2978814039)
(0.40500,0.2944588159)
(0.41000,0.2910611281)
(0.41500,0.2876879162)
(0.42000,0.2843387688)
(0.42500,0.2810132864)
(0.43000,0.2777110814)
(0.43500,0.2744317775)
(0.44000,0.2711750090)
(0.44500,0.2679404205)
(0.45000,0.2647276667)
(0.45500,0.2615364117)
(0.46000,0.2583663291)
(0.46500,0.2552171010)
(0.47000,0.2520884181)
(0.47500,0.2489799795)
(0.48000,0.2458914919)
(0.48500,0.2428226700)
(0.49000,0.2397732354)
(0.49500,0.2367429172)
(0.50000,0.2337314511)
(0.50500,0.2307385794)
(0.51000,0.2277640509)
(0.51500,0.2248076205)
(0.52000,0.2218690490)
(0.52500,0.2189481030)
(0.53000,0.2160445547)
(0.53500,0.2131581816)
(0.54000,0.2102887665)
(0.54500,0.2074360972)
(0.55000,0.2045999663)
(0.55500,0.2017801712)
(0.56000,0.1989765140)
(0.56500,0.1961888010)
(0.57000,0.1934168430)
(0.57500,0.1906604548)
(0.58000,0.1879194555)
(0.58500,0.1851936679)
(0.59000,0.1824829186)
(0.59500,0.1797870381)
(0.60000,0.1771058603)
(0.60500,0.1744392228)
(0.61000,0.1717869662)
(0.61500,0.1691489349)
(0.62000,0.1665249761)
(0.62500,0.1639149404)
(0.63000,0.1613186811)
(0.63500,0.1587360548)
(0.64000,0.1561669207)
(0.64500,0.1536111409)
(0.65000,0.1510685803)
(0.65500,0.1485391062)
(0.66000,0.1460225888)
(0.66500,0.1435189004)
(0.67000,0.1410279161)
(0.67500,0.1385495132)
(0.68000,0.1360835714)
(0.68500,0.1336299726)
(0.69000,0.1311886008)
(0.69500,0.1287593425)
(0.70000,0.1263420859)
(0.70500,0.1239367215)
(0.71000,0.1215431417)
(0.71500,0.1191612408)
(0.72000,0.1167909151)
(0.72500,0.1144320627)
(0.73000,0.1120845836)
(0.73500,0.1097483793)
(0.74000,0.1074233535)
(0.74500,0.1051094111)
(0.75000,0.1028064589)
(0.75500,0.1005144054)
(0.76000,0.0982331604)
(0.76500,0.0959626356)
(0.77000,0.0937027439)
(0.77500,0.0914533997)
(0.78000,0.0892145191)
(0.78500,0.0869860194)
(0.79000,0.0847678193)
(0.79500,0.0825598388)
(0.80000,0.0803619995)
(0.80500,0.0781742239)
(0.81000,0.0759964361)
(0.81500,0.0738285614)
(0.82000,0.0716705260)
(0.82500,0.0695222577)
(0.83000,0.0673836853)
(0.83500,0.0652547387)
(0.84000,0.0631353489)
(0.84500,0.0610254482)
(0.85000,0.0589249699)
(0.85500,0.0568338481)
(0.86000,0.0547520183)
(0.86500,0.0526794168)
(0.87000,0.0506159811)
(0.87500,0.0485616494)
(0.88000,0.0465163611)
(0.88500,0.0444800565)
(0.89000,0.0424526767)
(0.89500,0.0404341638)
(0.90000,0.0384244608)
(0.90500,0.0364235117)
(0.91000,0.0344312610)
(0.91500,0.0324476546)
(0.92000,0.0304726386)
(0.92500,0.0285061605)
(0.93000,0.0265481682)
(0.93500,0.0245986105)
(0.94000,0.0226574372)
(0.94500,0.0207245985)
(0.95000,0.0188000456)
(0.95500,0.0168837304)
(0.96000,0.0149756054)
(0.96500,0.0130756241)
(0.97000,0.0111837404)
(0.97500,0.0092999090)
(0.98000,0.0074240854)
(0.98500,0.0055562256)
(0.99000,0.0036962864)
(0.99500,0.0018442252)
(1.00000,0.0000000000)
};
\end{axis}
\end{tikzpicture}